    \def\Complex{{\rm\rule[.23ex]{.03em}{1.1ex}\kern-.3em{C}}}
    \newcommand{\be}{\begin{equation}} \newcommand{\ee}{\end{equation}}
    \newcommand{\bea}{\begin{eqnarray}} \newcommand{\eea}{\end{eqnarray}}
    \newcommand{\benum}{\begin{enumerate}} \newcommand{\eenum}{\end{enumerate}}
  \newtheorem{theorem}{Theorem}
\begin{document}

\title{Truth-Telling Mechanism for Secure Two-Way Relay Communications with Energy-Harvesting Revenue\thanks{This work is supported by EPSRC under grant EP/K015893/1.}}

\author{Muhammad R. A. Khandaker,~\IEEEmembership{Member,~IEEE}, Kai-Kit Wong,~\IEEEmembership{Fellow,~IEEE},

\thanks{M. R. A. Khandaker and K. K. Wong are with the Department of Electronic and Electrical Engineering, University College London, WC1E 7JE, United Kingdom (e-mail: $\rm m.khandaker@ucl.ac.uk$; $\rm kai\text{-}kit.wong@ucl.ac.uk$).} and Gan Zheng, ~\IEEEmembership{Senior Member,~IEEE},\thanks{G. Zheng is with The Wolfson School of Mechanical, Electrical and Manufacturing Engineering, Loughborough University, United Kingdom (e-mail: $\rm g.zheng@lboro.ac.uk$).}}

\maketitle \thispagestyle{empty}
\vspace*{-3em}

\begin{abstract}
This paper brings the novel idea of paying the utility to the winning agents in terms of some physical entity in cooperative communications. Our setting is a secret two-way communication channel where two transmitters exchange information in the presence of an eavesdropper. The relays are selected from a set of interested parties such that the secrecy sum rate is maximized. In return, the selected relay nodes' energy harvesting requirements will be fulfilled up to a certain threshold through their own payoff so that they have the natural incentive to be selected and  involved in the communication. However, relays may exaggerate their private information in order to improve their chance to be selected. Our objective is to develop a mechanism for relay selection that enforces them to reveal the truth since otherwise they may be penalized. We also propose a joint cooperative relay beamforming and transmit power optimization scheme based on an alternating optimization approach. Note that the problem is highly non-convex since the objective function appears as a product of three correlated Rayleigh quotients. While a common practice in the existing literature is to optimize the relay beamforming vector for given transmit power via rank relaxation, we propose a second-order cone programming (SOCP)-based approach in this paper which requires a significantly lower computational task. The performance of the incentive control mechanism and the optimization algorithm has been evaluated through numerical simulations.

\begin{center}
{\bf Index Terms}
\end{center}
Cooperative beamforming; energy harvesting; mechanism design; secrecy; two-way relay.
\end{abstract}



\section*{\sc I. Introduction}
Relaying is a promising technique to extend wireless coverage and increase the achievable rate \cite{jrnl_mur1, jrnl_su_para, tway_sec_dist, tway_sec_jam}, and in recent years it has also been recognized as a spectrally efficient way to exchange information over distance between two transceivers via two-way relaying \cite{tway_rate_reg, tway_spce_eff, jrnl_2way, tway_sec}. Relays, if used collaboratively, can also form focused signal or noise beams to provide physical-layer security \cite{secrecy_coop, tway_sec_dist, tway_sec_jam}. Collaborative relays follow the same idea as multiple antennas to exploit the spatial degrees of freedom for enhancing the signals to the legitimate receiver and worsening the interception of the eavesdropper by transmitting artificial noises \cite{goel_an, khisti_misome, qli_spatial, tway_sec_mimo}.


There is a huge scope of research for selecting the best relay nodes in maximizing the system performance. A meaningful setting would be to let the selected relays earn some form of revenue for relaying others' information. In this case, challenge arises because the candidates may behave selfishly to maximize their own revenues. To tackle this, game theory is a popular tool to analyze the conflict of interests among intelligent rational competitors \cite{gth_comm, gth_tway_jamm, gth_int_jamm}. Auction and pricing schemes were proposed for efficient selection of a social choice, but most of them were based on the assumption that the players are honest and ready to disclose their true private information \cite{gth_tway_jamm, gth_int_jamm}, which may not be the case in practice. Also, in the literature, the ``revenues'' are usually some abstract quantities that may not be meaningful \cite{gth_tway_jamm, gth_int_jamm, mechD_maga, mechd_caching, mechD_truth}.

Nevertheless, a recent development in wireless communications, which promotes energy transfer over wireless channels, may be the answer to help quantify the revenues one may gain from contributing to others' communications. Through simultaneous wireless information and power transfer (SWIPT), mobile users are provided with access to both energy and data at the same time which brings enormous prospects of new applications \cite{swipt_1st, swipt_s2t, swipt_bc, jrnl_swipt, jrnl_secrecy, jrnl_secrecy_sinr}. The concept of SWIPT was first introduced in \cite{swipt_1st} in a single noisy line, and later extended in \cite{swipt_s2t} to frequency-selective channels. Practical SWIPT schemes, namely, time switching and power splitting, have also been proposed \cite{swipt_bc, jrnl_swipt}. Recent studies further considered the combination of SWIPT with physical-layer security \cite{jrnl_secrecy, jrnl_secrecy_sinr}, one-way relaying \cite{swipt_nasir}, and two-way relaying \cite{tway_swipt}.

The focus of this paper is fundamentally different from the literature. While we consider relay selection for a two-way communication system in which two nodes exchange information with the help of a set of relay nodes in the presence of an eavesdropper, rather than concentrating primarily on reaping the benefits of relaying for secrecy communications, our aim is to develop an efficient mechanism to ensure that the relays reveal their true private information for relay selection optimization. In this particular problem, the channel coefficients from a relay to the two sources and the eavesdropper are regarded as the private information of that relay. The participation of relays is incentivised by the possible energy earning from the sources. In particular, the source transmitters will ensure that the energy harvesting requirements of the selected relays are fulfilled up to a certain threshold (or the expected payoff level).

The problem is that under this setup, the relays may exaggerate their private information to improve their chance to be selected, hoping to maximize their energy earning. The objective for a self-enforcing truth-revealing mechanism is to ensure that the relays reveal their actual private information to avoid being punished to pay for any damage caused. Note that mechanism design approaches have already been considered for suppressing cheating in cognitive radio networks \cite{mechD_maga}, wireless video caching \cite{ mechd_caching}, and one-way relaying \cite{mechD_truth}. However, in \cite{mechD_maga, mechd_caching, mechD_truth}, the revenue was paid in terms of some virtual entity, which does not directly relate to the concerned participants, while in this paper, the revenue is physically defined as {\em harvested energy}. In the context of energy harvesting facility considered in this paper, it is assumed that only the {\it selected} relays can harvest their {\it required} energy, and the {\it unselected} relay nodes will harvest almost {\it nothing}. It is also assumed that the relays will participate in the mechanism, as is common in conventional relaying \cite{tway_sec_dist, tway_sec_jam, tway_rate_reg, tway_spce_eff}, even in the absence of dedicated energy transmission. However, there is no guard mechanism to prevent any relay from announcing its undermined channel condition in an attempt \textit{not} to be selected so it can harvest energy without paying any penalty. In that case, the relay may remain unselected even with a better channel condition. But the reality is that the channel state information (CSI) of each relay is its own private information and none of the relays actually knows the channel conditions of the other relays. Hence none of them can define any threshold downplaying by which may guarantee its non-selection. Although it may be generally assumed that any unselected relay will be able to harvest some extent of energy, there is no guarantee that the harvested energy would be above a useful level. Thus the key motivation for the relays to participate in the mechanism is that through the proposed mechanism they yield QoS guarantee (at least minimum incentive) in terms of energy earning. On the other hand, the unselected relays have no such guarantee.

With the mechanism, we then propose a joint collaborative relay beamforming and transmit power optimization scheme for maximizing the sum secrecy rate while guaranteeing the expected payoff of each selected relay node in the form of its harvested energy. The optimization problem appears to be highly non-convex as the objective function is a product of three correlated Rayleigh quotients. While a common practice tends to optimize the collaborative relay beamforming vector for a given transmit power using rank relaxation, our proposed approach requires no rank relaxation. Instead, we formulate the relay beamforming problem as a second-order cone program (SOCP), which has lower computational overhead.

To the best of our knowledge, the closest work in the existing literature to this paper can be found in \cite{mechD_truth}. However, our contribution is three-fold compared to the work in \cite{mechD_truth}. Firstly, we consider two-way amplify-and-forward relaying, whereas one-way decode-and-forward (DF) relaying was considered in \cite{mechD_truth}. The DF relaying vastly simplifies the utility characterization for mechanism design. Hence the system model is different. Secondly, we define the utility of the auctioneers (relays) in terms of some practically appealing quantity (harvested energy) as opposed to the virtual payment considered in \cite{mechD_truth} and many other existing works \cite{ mechd_caching}. Note that the virtual payment system does not provide enough incentives to the players for participating in the auction. Thirdly, in addition to the incentive controlling mechanism design, we develop an optimal joint transmit power and relay beamforming design algorithm whereas \cite{mechD_truth} considered only truthful mechanism design for relay selection. We also note that collaborative relay beamforming problems for two-way relay systems were studied in \cite{tway_sec_dist, tway_sec_jam} but with a fixed number of relays, and without mechanism design and payments for the selected relays in terms of harvested energy.

The remainder of this paper is organized as follows. In Section~II, the system model for a two-way relay network in the presence of an eavesdropper is described. Truth-telling mechanism design strategies are then briefly introduced in Section~III. The joint-optimal collaborative relay beamforming and transmit power optimization algorithm is developed in Section~IV. Section~V presents the simulation results to illustrate the importance of the proposed mechanism design and we conclude the paper in Section~VI.

{\em Notations}---Throughout the paper, boldface lowercase and uppercase letters are used to represent vectors and matrices, respectively. The symbol ${\bf I}_n$ denotes an $n\times n$ identity matrix, while $\bf 0$ is a zero vector or matrix. Also, ${\bf A}^T$, ${\bf A}^H$, ${\bf A}^\dag$, ${\rm tr}({\bf A})$, ${\rm rank}({\bf A})$, and ${\rm det}({\bf A})$ represent transpose, the Hermitian (conjugate) transpose, matrix projection, trace, rank and determinant of a matrix ${\bf A}$, respectively; $\|\cdot\|$ represents the Euclidean norm; ${\bf A}\succeq {\bf 0}\, ({\bf A}\succ {\bf 0})$ means that ${\bf A}$ is a Hermitian positive semidefinite (definite) matrix; $[{\bf A}]_{i,j}$ denotes the $(i,j)$th element of ${\bf A}$. The notation ${\bf x}\sim \mathcal{CN}(\boldsymbol{\mu}, {\boldsymbol\Sigma})$ means that ${\bf x}$ is a random vector following a complex circularly symmetric Gaussian distribution with the mean vector $\boldsymbol{\mu}$ and the covariance matrix of ${\boldsymbol\Sigma}$.

\begin{figure}[ht]
\centering
\includegraphics*[width=\columnwidth]{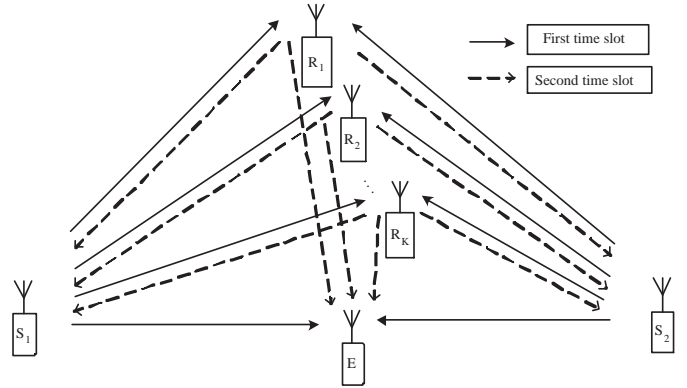}
\caption{A two-way relay system in the presence of an eavesdropper.}\label{sysmod}
\end{figure}

\section*{\sc II. System Model}\label{sec_sys}
We consider a two-way relay network consisting of two sources, ${\sf S}_1$ and ${\sf S}_2$, wishing to communicate with each other, $N$ relay nodes, $\{{\sf R}_i\}_{i=1}^N$, and an eavesdropper, ${\sf E}$, as illustrated in Fig.~\ref{sysmod}. There is no direct link between the two source nodes, so communication has to be done via the relays. Assuming the more practical half-duplex relays, the communication is accomplished in two time slots. In the first time slot, the source nodes broadcast their signals $s_1$ and $s_2$ to all the relay nodes. In the second time slot, the source nodes decide which of those $N$ relays will be selected to forward their messages to the corresponding destination nodes based on some predesigned mechanism which we will describe later. During the whole process, the eavesdropper node overhears the messages from the source nodes as well as the relay nodes. The source nodes aim at maximizing the secrecy sum-rate by properly selecting $K\leq N$ relay nodes. It is assumed that each relay node only knows its own CSI between itself and the transmitters as well as the eavesdropper. The relays then report their CSI to the mechanism designer (which may be one of the two sources or a centralized processor)\footnote{Note that the same node performs the transmit power and relay beamforming optimization and/or relay selection operations as well.} as their bids to be selected.

The messages, $s_1$ and $s_2$, transmitted from the sources need to be kept confidential to ${\sf E}$. It is assumed that $s_1$ and $s_2\sim\mathcal{CN}(0,1)$, and the transmit power from ${\sf S}_1$ and ${\sf S}_2$ is, respectively, $p_{{\rm s},1}$ and $p_{{\rm s},2}$. In the first time slot, the received signals at ${\sf R}_i$ and ${\sf E}$ are, respectively, given by
\begin{align}
y_{{\rm r},i}& = \sqrt{p_{{\rm s},1}}h_{1,i}s_1 + \sqrt{p_{{\rm s},2}}h_{2,i}s_2 + n_{{\rm r},i}, ~\mbox{for }i=1,\dots, N, \label{yri}\\
y_{{\rm e}}^{(1)}& = \sqrt{p_{{\rm s},1}}h_{1,{\rm e}}s_1 + \sqrt{p_{{\rm s},2}}h_{2,{\rm e}}s_2 + n_{{\rm e}}^{(1)},\label{ye1}
\end{align}
where $h_{i,j}$ for $i = 1, 2$ and $j = 1, \dots, N,$ denote the complex channel gains between ${\sf S}_i$ and ${\sf R}_j$ and $h_{i,{\rm e}}$ for $i = 1, 2$, are that between ${\sf S}_i$ and ${\sf E}$, $n_{{\rm r},i}\sim\mathcal{CN}(0,\sigma^2)$ and $n_{{\rm e}}^{(1)}\sim\mathcal{CN}(0,\sigma^2)$ represent the complex additive white Gaussian noises (AWGNs) at ${\sf R}_i$ and ${\sf E}$ during the first time slot, respectively.

In vector form, the signals received at all the relays can be expressed as
\begin{equation}
{\bf y}_{{\rm r}} = \sqrt{p_{{\rm s},1}}{\bf h}_{1,{\rm r}}s_1 + \sqrt{p_{{\rm s},2}}{\bf h}_{2,{\rm r}}s_2 + {\bf n}_{{\rm r}},
\end{equation}
where ${\bf h}_{1,{\rm r}} \triangleq \left[h_{1,1}, \dots, h_{1,N}\right]^T$, ${\bf h}_{2,{\rm r}} \triangleq \left[h_{2,1}, \dots, h_{2,N}\right]^T$ denote the channel vectors between the two sources and the relays, and ${\bf n}_{{\rm r}}\triangleq \left[n_{{\rm r},1}, \dots, n_{{\rm r},N}\right]^T$ indicates the AWGN vector at the relay nodes. We assume that each relay node is equipped with a power splitting device to coordinate harvesting energy and forwarding the received signal. In particular, the received signal at the $i$th relay, ${\sf R}_i$, is split such that a $\rho_i\in[0,1]$ portion of the signal power is passed to the information forwarding block and the remaining $1-\rho_i$ portion of the power is sent to the energy harvesting block of the relay. Several power splitting schemes have been considered in the literature \cite{swipt_bc, jrnl_swipt} including fixed power splitting and dynamic power splitting. In order to keep our main focus on mechanism design, we consider fixed power splitting in this paper. Interested readers are referred to \cite{swipt_bc, jrnl_swipt} for more about the dynamic power splitting schemes.

From \eqref{yri}, the harvested power at the $i$th relay node, ${\sf R}_i$, is given by
\begin{equation}
P_{{\rm h},i} = \xi_i(1-\rho_i)\left(p_{{\rm s},1}|h_{1,i}|^2 + p_{{\rm s},2}|h_{2,i}|^2 + \sigma^2\right),
\end{equation}
where $\xi_i\in(0,1]$ denotes the energy conversion efficiency of the energy transducers at the $i$th relay that accounts for the loss in the energy transducers for converting the harvested energy to electrical energy to be stored. For convenience, we assume, without loss of generality, that $\xi_k = 1, \forall k$, in this paper. It is worth pointing out that the relays do not need to convert the received signal from the radio frequency (RF) band to the baseband in order to harvest the carried energy using modern energy transducers. Therefore, according to the law of energy conservation, it is assumed that the total harvested RF band power (energy normalized by the baseband symbol period) at each relay is proportional to the normalised energy of the received baseband signal.

In the second time slot, ${\sf R}_i$ amplifies the received signal $\sqrt{\rho_i}y_{{\rm r},i}$ by a complex weighting coefficient $f^*_i$ and then transmits $x_{{\rm r},i} = \sqrt{\rho_i}f^*_iy_{{\rm r},i}$. Combining the transmit signals from all the relay nodes, we have ${\bf x}_{{\rm r}} = {\bf F}{\bf y}_{{\rm r}}$ where ${\bf F}$ is the combined diagonal weight matrix in the form ${\bf F} = {\rm diag}\left({\bf f}^*\right)$, with ${\bf f}\triangleq \left[\sqrt{\rho_1}f_1, \dots, \sqrt{\rho_N}f_N\right]^T$. Note that for notational simplicity, the power splitting coefficients have been incorporated in the definition of the relay beamforming vector ${\bf f}$.
It is also assumed that the channel coefficients between the transmitters and the relays are block-fading reciprocal. The block-fading reciprocal channel assumption has been widely used in two-way relay literature, e.g., \cite{tway_sec_dist, tway_sec_jam, tway_rate_reg}. The assumption essentially means that channels for the two phases are reciprocal, which is based on the time division duplex (TDD) operation with synchronized time-slot. The TDD operation greatly reduces signalling overhead and leads to an SOCP-based problem formulation with reduced complexity, which we will elaborate in section~IV. Thus, the received signal at ${\sf S}_1$ in the second time slot can be expressed as
\begin{multline}
y_{{\rm s},1} = {\bf h}_{1,{\rm r}}^T{\bf x}_{\rm r} + n_{{\rm s},1} = \sqrt{p_{{\rm s},1}}{\bf h}_{1,{\rm r}}^T{\bf F}{\bf h}_{1,{\rm r}}s_1\\ 
+ \sqrt{p_{{\rm s},2}}{\bf h}_{1,{\rm r}}^T{\bf F}{\bf h}_{2,{\rm r}}s_2 + {\bf h}_{1,{\rm r}}^T{\bf F}{\bf n}_{{\rm r}} + n_{{\rm s},1},
\end{multline}
where $n_{{\rm s},1}\sim\mathcal{CN}(0,\sigma^2)$ denotes the AWGN signal at source node ${\sf S}_1$.

Similarly, the received signal at ${\sf S}_2$ can be expressed as
\begin{multline}
y_{{\rm s},2} = {\bf h}_{2,{\rm r}}^T{\bf x}_{\rm r} + n_{{\rm s},2} = \sqrt{p_{{\rm s},1}}{\bf h}_{2,{\rm r}}^T{\bf F}{\bf h}_{1,{\rm r}}s_1\\ 
+ \sqrt{p_{{\rm s},2}}{\bf h}_{2,{\rm r}}^T{\bf F}{\bf h}_{2,{\rm r}}s_2 + {\bf h}_{2,{\rm r}}^T{\bf F}{\bf n}_{{\rm r}} + n_{{\rm s},2},
\end{multline}
and that at ${\sf E}$ can be written as
\begin{multline}
y_{\rm e}^{(2)} = {\bf h}_{\rm r, e}^T{\bf x}_{\rm r} + n_{\rm e}^{(2)} = \sqrt{p_{{\rm s},1}}{\bf h}_{\rm r, e}^T{\bf F}{\bf h}_{1,{\rm r}}s_1\\ 
+ \sqrt{p_{{\rm s},2}}{\bf h}_{\rm r, e}^T{\bf F}{\bf h}_{2,{\rm r}}s_2 + {\bf h}_{\rm r, e}^T{\bf F}{\bf n}_{{\rm r}} + n_{\rm e}^{(2)},\label{ye2}
\end{multline}
where $n_{{\rm s},2}\sim\mathcal{CN}(0,\sigma^2)$ and $n_{\rm e}^{(2)}\sim\mathcal{CN}(0,\sigma^2)$ are the noises at ${\sf S}_2$ and ${\sf E}$ in the second time slot.

Since $s_1$ and $s_2$ are known, respectively, at ${\sf S}_1$ and ${\sf S}_2$, the residual received signals after self-interference cancellation (typical for two-way channels) are, respectively, given by
\begin{align}
y_{{\rm s},1} &= \sqrt{p_{{\rm s},2}}{\bf h}_{1,{\rm r}}^T{\bf F}{\bf h}_{2,{\rm r}}s_2 + {\bf h}_{1,{\rm r}}^T{\bf F}{\bf n}_{{\rm r}} + n_{{\rm s},1}\\ 
&= \sqrt{p_{{\rm s},2}}{\bf f}^H{\bf H}_{1,{\rm r}}{\bf h}_{2,{\rm r}}s_2 + {\bar n}_{{\rm s},1}\\ 
&= \sqrt{p_{{\rm s},2}}{\bf f}^H{\bf h}_{2,1}s_2 + {\bar n}_{{\rm s},1},
\end{align}
and
\begin{align}
y_{{\rm s},2} &= \sqrt{p_{{\rm s},1}}{\bf h}_{2,{\rm r}}^T{\bf F}{\bf h}_{1,{\rm r}}s_1 + {\bf h}_{2,{\rm r}}^T{\bf F}{\bf n}_{{\rm r}} + n_{{\rm s},2}\\ 
&= \sqrt{p_{{\rm s},1}}{\bf f}^H{\bf H}_{2,{\rm r}}{\bf h}_{1,{\rm r}}s_1 + {\bar n}_{{\rm s},2}\\ 
&= \sqrt{p_{{\rm s},1}}{\bf f}^H{\bf h}_{1,2}s_1 + {\bar n}_{{\rm s},2},
\end{align}
where ${\bf H}_{i,{\rm r}} \triangleq {\rm diag}({\bf h}_{i,{\rm r}})$, ${\bf h}_{j,i} \triangleq {\bf H}_{i,{\rm r}}{\bf h}_{j,{\rm r}}$, for $i,j = 1, 2$, and $j \neq i$, ${\bar n}_{{\rm s},i} \triangleq {\bf h}_{i,{\rm r}}^T{\bf F}{\bf n}_{{\rm r}} + n_{{\rm s},i}$, for $i = 1, 2$, and we have used the identity ${\bf a}^H{\rm diag}({\bf b}) = {\bf b}^H{\rm diag}({\bf a})$. Note that each transmission phase brings some opportunity for ${\sf E}$ to overhear the information. Hence, combining the received signals in \eqref{ye1} and \eqref{ye2} at ${\sf E}$ over two time slots, an equivalent multiple-input multiple-output (MIMO) channel is formed, i.e.,
\begin{multline}
\begin{array}{c}\underbrace{\left[\begin{array}{c}
y_{\rm e}^{(1)}\\
y_{\rm e}^{(2)}
\end{array}
\right]}=\\
{\bf y}_{\rm e}
\end {array}
\begin{array}{c}\underbrace{\left[\begin{array}{cc}
\sqrt{p_{{\rm s},1}}h_{1,{\rm e}} & \sqrt{p_{{\rm s},2}}h_{2,{\rm e}}\\
\sqrt{p_{{\rm s},1}}{\bf f}^H{\bar{\bf h}}_{1,{\rm e}} & \sqrt{p_{{\rm s},2}}{\bf f}^H{\bar{\bf h}}_{2,{\rm e}}
\end{array}
\right]}\\
{\bf H}_{\rm e}
\end {array} \begin{array}{c}\underbrace{\left[\begin{array}{c}
s_1\\
s_2
\end{array}
\right]}\\
{\bf s}
\end {array}
\\
\begin{array}{c}\underbrace{+\left[\begin{array}{c}
{n}_{\rm e}^{(1)}\\
{\bar n}_{\rm e}^{(2)}
\end{array}
\right]},\\
{\bf n}_{\rm e}
\end {array}\label{ye}
\end{multline}
where ${\bar{\bf h}}_{i,{\rm e}} \triangleq {\bf H}_{\rm r, e}{\bf h}_{i,{\rm r}}$, for $i = 1, 2$, ${\bf H}_{\rm r, e} \triangleq {\rm diag}({\bf h}_{\rm r, e})$, and ${\bar n}_{\rm e}^{(2)} \triangleq {\bf h}_{\rm r, e}^T{\bf F}{\bf n}_{{\rm r}} + n_{\rm e}^{(2)}$.

As a result, the corresponding signal-to-noise ratio (SNR) for the equivalent transmission link from ${\sf S}_2$ to ${\sf S}_1$ can be expressed as
\begin{equation}
\gamma_{1} = \frac{p_{{\rm s},2}{\bf f}^H{\bf h}_{2,1}{\bf h}_{2,1}^H{\bf f}}{\sigma^2\left({\bf f}^H{\bf C}_{{\rm n},1}{\bf f} + 1\right)},
\end{equation}
where ${\bf C}_{{\rm n},1} \triangleq {\bf H}_{1,{\rm r}}{\bf H}_{1,{\rm r}}^H$. Similarly, the SNR for the equivalent transmission link from ${\sf S}_1$ to ${\sf S}_2$ is
\begin{equation}
\gamma_{2} = \frac{p_{{\rm s},1}{\bf f}^H{\bf h}_{1,2}{\bf h}_{1,2}^H{\bf f}}{\sigma^2\left({\bf f}^H{\bf C}_{{\rm n},2}{\bf f} + 1\right)}
\end{equation}
with ${\bf C}_{{\rm n},2} \triangleq {\bf H}_{2,{\rm r}}{\bf H}_{2,{\rm r}}^H$. Thus, the channel capacities at ${\sf S}_1$, ${\sf S}_2$, and ${\sf E}$ are given, respectively, by
\begin{align}
C_{1} &= \frac{1}{2}\log_2\left(1 + \gamma_{1}\right),\\
C_{2} &= \frac{1}{2}\log_2\left(1 + \gamma_{2}\right),
\end{align}
and
\begin{equation}\label{ce}
C_{{\rm e}} = \frac{1}{2}\log_2\det\left({\bf I}_2 + {\bf H}_{\rm e}{\bf H}_{\rm e}^H{\bf C}_{\rm n,e}^{-1}\right),
\end{equation}
where ${\bf C}_{\rm n,e} \triangleq {\rm diag}\left(\sigma^2,\sigma^2\left(1 + {\bf f}^H{\bf H}_{\rm r,e}{\bf f}\right)\right)$ is the equivalent noise covariance matrix at the eavesdropper ${\sf E}$ over the two time slots and the scalar factor $\frac{1}{2}$ is due to the fact that two time slots are required in order to accomplish one successful transmission. Then the achievable secrecy sum rate is given by \cite{tway_sec_dist, tway_sec_jam}
\begin{equation}
C_{{\rm s}} = \left[C_{1} + C_{2} - C_{{\rm e}}\right]^+ \label{sec_sum}
\end{equation}
where $[a]^+ = \max(0,a)$. Note that the secrecy sum-rate in \eqref{sec_sum} is the sum of secrecy rates provided by all the relay nodes. Since all the relay nodes may not have sufficiently strong fading channels in order to make a useful contribution to the secrecy sum-rate, selecting the appropriate relays as helpers can play a significant role in improving secrecy performance. In the next section, we will focus on the mechanism design approach in order to select the $K$ best relays that can make the most significant contribution.

However, since the relays selected will have greater opportunity\footnote{Note that in the proposed beamforming algorithm, the transmitters will transmit with sufficient power such that the energy harvesting requirements of all the selected relay nodes are satisfied at least to equality assuming that the relays report their true channel information.} to harvest energy from the received signal, all the relays will be naturally interested in participating in the mechanism. The issue is that some of them may intentionally exaggerate their true information in order to be selected. We will focus on the incentive control mechanisms so that the participating relays are self-enforced to reveal the truth.

\section*{\sc III. Truth-Telling Mechanism Design}\label{sec_mechD}
This section provides a brief introduction of mechanism design. A mechanism $\mathcal M$ is defined by the tuple $\left({\mathcal S}, t_1, \dots, t_N\right)$ where $t_i$ for $i = 1, \dots, N,$ represents the transfer payment of agent $i$ (or player $i$)\footnote{In this paper, the terms ``player'' and ``agent'' will be used interchangeably.} when the social choice is ${\mathcal S}$. The transfer payment is the compensation paid by an agent in return to the social damage it causes to the others by being selected. Mechanism design (sometimes called reverse game theory) is a game theoretical tool that studies solutions for a class of private information games in order to achieve a specific system-wide outcome even though the agents are selfish \cite{nobel07}. In a mechanism, each agent reports its private information (referred to as `type' in the native literature) to the designer that serves as the parameter of a valuation function quantifying its bid on a specific allocation outcome and the transfer payment. The most desirable criteria that the mechanism designers tend to achieve are incentive compatibility and social optimality. A mechanism is said to be incentive compatible if truth-telling becomes the dominant (best) strategy in the mechanism while the mechanism is social optimum if it can ensure the maximum aggregate utilities of all the agents in the system. The Vickrey-Clarke-Groves (VCG) mechanism \cite{vickery, clarke, groves} is well known to achieve these two goals. Hence, we consider the VCG mechanism in the relay selection problem in order to maximize the secrecy sum-rate.

\subsection*{A. VCG Mechanism}\label{subsec_vcg}
In the VCG mechanism, agents are the members of the society. All the agents announce their valuations for the auctioned items simultaneously. Hence, there is no way to know whether the agents are telling the truth. The design objective is to give the agents the right incentives to tell the truth. The social choice is a set of $K$ agents from a set of $N$ alternatives for $K$ identical auctioned items. In VCG mechanism, each winning agent must pay some compensation (i.e., transfer payment) for the social damage it causes. The more the damage, the higher is the transfer payoff. We will now present the framework to quantify how much each agent $i$ contributes to the rest of the society if selected.

Let $v_i\left({\mathcal X}, \theta_i\right)$ denote the valuation by agent $i$ from alternative ${\mathcal X}$ given the true information $\theta_i$. We also denote ${\mathcal O}(\hat\theta_i,\hat{{\theta}}_{-i})$ as the utilitarian alternative (i.e., outcome of the mechanism) chosen from the available set of alternatives based on the reported information $\{\hat\theta_i\}_{i = 1}^N$, as opposed to the true information $\{\theta_i\}_{i = 1}^N$, where the variable $\hat{{\theta}}_{-i} \triangleq \{\hat\theta_1, \dots, \hat\theta_{i-1}, \hat\theta_{i+1}, \dots, \hat\theta_{N}\}$ is defined as the set of reported information of all the agents except agent $i$. Also, ${\mathcal O}_{-i}(\hat\theta_j,\hat{{\theta}}_{-j})$ represents the utilitarian alternative when agent $i$ does not take part in the mechanism. Note that the type profile $\hat{{\theta}} \triangleq \{\hat\theta_1, \dots, \hat\theta_{N}\}$ is an ordered list in the decreasing manner.

The total welfare of the society (excluding $i$) is thus given by $\sum_{j\neq i}^Kv_j\left({\mathcal O}(\hat\theta_j,\hat{{\theta}}_{-j}),\theta_j\right)$. If agent $i$ were not a member of the society, then the social welfare would be changed to $\sum_{j=1}^Kv_j\left({\mathcal O}_{-i}(\hat\theta_j,\hat{{\theta}}_{-j}),\theta_j\right)$. The difference in the social welfare with and without the presence of agent $i$ is a measure of how much agent $i$ contributes to the rest of the society. In the VCG mechanism, agent $i$ receives a monetary transfer payment equal to the amount it contributes to the rest of the society. As a result, the VCG mechanism is characterized by the following monetary transfer payment function
\begin{align}
t_i(\hat\theta_i,\hat{{\theta}}_{-i}) &= \sum_{j\neq i}^Kv_j\left({\mathcal O}(\hat\theta_j,\hat{{\theta}}_{-j}),\theta_j\right)\nonumber\\
&- \sum_{j=1}^Kv_j\left({\mathcal O}_{-i}(\hat\theta_j,\hat{{\theta}}_{-j}),\theta_j\right)\label{trans_vcg}\\
&= \sum_{j = 1}^Kv_j\left({\mathcal O}(\hat\theta_j,\hat{{\theta}}_{-j}),\theta_j\right)\nonumber\\ 
&- \sum_{j=1}^Kv_j\left({\mathcal O}_{-i}(\hat\theta_j,\hat{{\theta}}_{-j}),\theta_j\right)\nonumber\\
&- v_i\left({\mathcal O}(\hat\theta_i,\hat{{\theta}}_{-i}),\theta_i\right).\label{trans_vcg1}
\end{align}
Note that the two summation operations in \eqref{trans_vcg} and \eqref{trans_vcg1} are conducted within two different sets of alternatives namely ${\mathcal O}(\hat\theta_i,\hat{{\theta}}_{-i})$ and ${\mathcal O}_{{-i}}(\hat\theta_j,\hat{{\theta}}_{-j})$. The first sum $\sum_{j\neq i}^Kv_j\left({\mathcal O}(\hat\theta_j,\hat{{\theta}}_{-j}),\theta_j\right)$ in (17) includes ($K-1$) terms while the second sum $\sum_{j = 1}^Kv_j\left({\mathcal O}_{-i}(\hat\theta_j,\hat{{\theta}}_{-j}),\theta_j\right)$ includes $K$ different terms. Thus given a type profile $\hat\theta$, the monetary transfer to agent $i$ is defined by the total value of all agents other than $i$ when agent $i$ is present in the system minus the total value of all agents when agent $i$ is absent in the system. The value is always negative since the sum of apparently (in absence of the $i$th item) highest $K$ valuations is subtracted from the sum of the highest $(K-1)$ valuations. Note that the transfer payment of agent $i$ is independent of its own valuation $v_i$. The difference of the first two terms in \eqref{trans_vcg1} represents the marginal contribution of agent $i$ to the system which is given as a discount to agent $i$ by the VCG payment mechanism. It is evident from \eqref{trans_vcg1} that all the $K$ winning bidders pay a social damage recovery payment equal to the highest non-winning (i.e., the $(K+1)$-st) bid, whereas a losing bidder pays nothing, i.e.,
\begin{equation}
t_i(\hat\theta_i,\hat{{\theta}}_{-i}) = \left\{\begin{array}{l}\!-v_{K+1}\left({\mathcal O}(\hat\theta_j,\hat{{\theta}}_{-j}),\theta_j\right), ~\text{for  } k = 1, \dots, K,\\
\! 0, \qquad\qquad\text{for  } k = K+1, \dots, N. \end{array}\right.
\end{equation}

In the VCG mechanism, the highest $K$ bidders win and the winning bidder $i$ attains a utility (payoff) of
\begin{eqnarray}
u_i\left(\hat\theta_i,\hat\theta_{-i}\right) \!\!\!&=&\!\!\! v_i\left({\mathcal O}\left(\hat\theta_i,\hat\theta_{-i}\right),\theta_i\right) + t_i\left(\hat\theta_i,\hat\theta_{-i}\right) \label{util_vcg}\\
\!\!\!&=&\!\!\! \sum_{i = 1}^Kv_i\left({\mathcal O}\left(\hat\theta_i,\hat\theta_{-i}\right),\theta_i\right)\nonumber\\
\!\!\!& &\!\!\! - \sum_{j=1}^Kv_j\left({\mathcal O}_{-i}\left(\hat\theta_j,\hat\theta_{-j}\right),\theta_j\right).\label{util_vcg1}
\end{eqnarray}
Note that the penalty method to prevent reporting false information by agent $i$ is imposed by the transfer payment $t_i\left(\hat\theta_i,\hat\theta_{-i}\right)$ in \eqref{util_vcg} which distinguishes mechanism design from conventional game theory. In conventional game theory, the agents can exaggerate their private information arbitrarily in order to be selected such that their own payoff is maximized. But in the VCG mechanism, the transfer payment will penalize them if they do so. Thus the selected utilitarian alternative maximizes the sum of the announced valuations, i.e., $$\sum_{i = 1}^Kv_i\left({\mathcal O}\left(\hat\theta_i,\hat\theta_{-i}\right),\theta_i\right) \geq \sum_{j = 1}^Kv_j\left({\mathcal O}_{-i}\left(\hat\theta_j,\hat\theta_{-j}\right),\theta_j\right),$$
where the equality holds only when all the agents reveal their true private information.
Let us now elaborate the VCG payment mechanism through a simple numerical example.

\subsubsection*{Example~1 VCG Transfer Payment}
Consider five agents $\{1, 2, 3, 4, 5\}$ with valuations $v_1 = 22$, $v_2 = 18$, $v_3 = 15$, $v_4 = 12$ and $v_5 = 8$ participating in a sealed bid auction for three identical items available for auction. Each bidder can bid for one item only. Applying the VCG mechanism, bidders $1, 2,$ and $3$ should win since their bids confirm the maximum social welfare $(22 + 18 + 15 = 55)$. The transfer payment by bidder $1$ is calculated as
\begin{eqnarray}
t_1\left(\hat\theta_1,\hat\theta_{-1}\right) \!\!\!&=&\!\!\! \sum_{j\neq 1}^3v_j\left({\mathcal O}\left(\hat\theta_j,\hat\theta_{-j}\right),\theta_j\right)\nonumber\\
\!\!\!& &\!\!\! - \sum_{j = 1}^3v_j\left({\mathcal O}_{-1}\left(\hat\theta_j,\hat\theta_{-j}\right),\theta_j\right)\nonumber\\
\!\!\!&=&\!\!\! (18 + 15) - (18 + 15 + 12)\nonumber\\
\!\!\!&=&\!\!\! -12.\nonumber
\end{eqnarray}
Thus bidder $1$ pays an amount $(12)$ equal to the highest non-winning bid $v_4 = 12$ for the social damage caused by its selection. Similarly, the transfer payments paid by bidders $2$ and $3$ both equal to $12$. Note that the payments are consistent with their respective marginal contributions. The marginal contribution of agent $1$ is given by
\begin{eqnarray}
\sum_{j = 1}^3 \!\!\!&&\!\!\! \!\!\!\!\!\!v_j\left({\mathcal O}\left(\hat\theta_j,\hat\theta_{-j}\right),\theta_j\right) - \sum_{j = 1}^3v_j\left({\mathcal O}_{-1}\left(\hat\theta_j,\hat\theta_{-j}\right),\theta_j\right)\nonumber\\
\!\!\!&=&\!\!\! (22 + 18 + 15) - (18 + 15 + 12)\nonumber\\
\!\!\!&=&\!\!\! 10 \nonumber
\end{eqnarray}
which is given as a discount to agent $1$ resulting in a transfer payment of $10 - 22 = -12$. Similarly, the marginal contribution of agents $2$ and $3$ can be computed as $(22 + 18 + 15) - (22 + 15 + 12) = 6$ and $(22 + 18 + 15) - (22 + 18 + 12) = 3$.

Thus the utilities of the agents can be computed as $u_1 = 22 - 12 = 10, u_2 = 18-12 = 6, u_3 = 15-12 = 3, u_4 = 0, u_5 = 0$.

Let us now assume that agent $4$ announces an exaggerated valuation of $v_4 = 22$, as opposed to its true valuation $12$, with a desire to win. Thus the agents $\{1, 2, 4\}$ win and their transfer payments can be obtained as $t_1 = t_2 = t_4 = -15$, which is equal to the highest non-winning bid. The corresponding payoffs of the winning bids are computed as $u_1 = 22-15 = 7, u_2 = 18 - 15 = 3, u_4 = 12 - 15 = -3$. Note that a negative utility of agent $4$ indicates that the agent must pay additional amount from its own pocket in order to comply with the auction rules. Now the total social welfare counts to $\sum_{i=1}^5 u_i = 7 + 3 -3 + 0 + 0 = 7$ as opposed to $19$ if all the agents would have announced their true valuations. Thus the VCG mechanism gives the incentives that if any of the agents announces untrue valuation, that may damage the total social benefit as well as its own utility.\hfill $\blacksquare$

In the following, we apply the VCG mechanism for relay selection in a two-way communication system in presence of an eavesdropper.

\subsection*{B. VCG Mechanism for Relay Selection}
We consider the channel coefficients of each relay node with the two source nodes and the eavesdropping node as the private information of that relay node. The relay nodes report their channel information $\hat{g}_i \triangleq \{\hat{h}_{1,i}, \hat{h}_{2,i}, \hat{h}_{{\rm e},i}\}$ to the source nodes (or the mechanism designer) simultaneously. Through reporting their CSI, the relay nodes actually commit to the mechanism designer the level of secrecy rates they can provide for the two source nodes. We assume that the selected relay nodes must keep their commitments during their transmission in the second phase. Although the reported information may not be the same as the true ones, the mechanism designer will select the relays treating them as true. Let $g_i \triangleq \{{h}_{1,i}, {h}_{2,i}, {h}_{{\rm e},i}\}$ denote ${\sf R}_i$'s true channel information and $C_{i,{\rm s}}(g_i)$ denote the achievable secrecy sum rate through relay ${\sf R}_i$. Note that the information leakage during the first time slot is not affected by the social choice of relays and we assume that the relays do cooperative null space beamforming towards the eavesdropper's channel.\footnote{This will be elaborated in Section~IV} Hence, $C_{i,{\rm s}}(g_i)$ can be defined as a function of the equivalent two-way single-input single-output (SISO) channel only. After removing the self-interference, the equivalent SISO channel from ${\sf S}_2$ to ${\sf S}_1$ via ${\sf R}_i$ can be modelled as
\begin{equation}
{\tilde y}_{{\rm s},1} = \alpha_i\sqrt{p_{\rm r}p_{{\rm s},2}}h_{1,i}h_{2,i}s_2 + \alpha_i\sqrt{p_{\rm r}}h_{1,i}n_{{\rm r},i} + n_{{\rm s},1}
\end{equation}
and that from ${\sf S}_1$ to ${\sf S}_2$ is given by
\begin{equation}
{\tilde y}_{{\rm s},2} = \alpha_i\sqrt{p_{\rm r}p_{{\rm s},1}}h_{2,i}h_{1,i}s_1 + \alpha_i\sqrt{p_{\rm r}}h_{2,i}n_{{\rm r},i} + n_{{\rm s},2},
\end{equation}
where $\alpha_i \triangleq \left(p_{{\rm s},1}|h_{1,i}|^2 + p_{{\rm s},2}|h_{2,i}|^2 + \sigma^2\right)^{-\frac{1}{2}}$ is the amplification factor satisfying the power constraint at relay $i$ and $p_{\rm r}$ is the available relay power budget. Thus ${\sf R}_i$'s independent valuation can be defined as
\begin{multline}
v_i\left(g_i\right) \triangleq C_{i,{\rm s}}\left(g_i\right) = \frac{1}{2}\left[\log_2\left(1 + \frac{\alpha_i^2p_{\rm r}p_{{\rm s},1}|h_{1,i}|^2|h_{2,i}|^2}{\sigma^2\left(\alpha_i^2p_{\rm r}|h_{1,i}|^2 + 1\right)}\right)\right.\\\left. + \log_2\left(1 + \frac{\alpha_i^2p_{\rm r}p_{{\rm s},2}|h_{2,i}|^2|h_{1,i}|^2}{\sigma^2\left(\alpha_i^2p_{\rm r}|h_{2,i}|^2 + 1\right)}\right)\right]. \label{csi}
\end{multline}
Note that by dividing the numerator and the denominator of both logarithmic terms in \eqref{csi} by $\alpha_i^2p_{\rm r}$, $C_{i,{\rm s}}\left(g_i\right)$ can be shown as an increasing function of $p_{{\rm s},1}, p_{{\rm s},2}$ and $p_{\rm r}$. Hence during the mechanism design phase, we obtain $C_{i,{\rm s}}(g_i)$ assuming $p_{{\rm s},1}, p_{{\rm s},2}$ and $p_{\rm r}$ hold their maximum possible value. Thus the utilitarian alternative ${\mathcal O}\left(\hat{g}_i, \hat{g}_{-i}\right)$ based on the reported channel information can be defined as
\begin{equation}
{\mathcal O}\left(\hat{g}_i, \hat{g}_{-i}\right) \triangleq \arg ~~ \max_{\{{\sf R}_k\}} \sum_{k=1}^KC_{i,{\rm s}}(\hat{g}_i). \label{selec_cri}
\end{equation}
Note that based on the definitions of the two sets ${\mathcal O}(\cdot)$ and ${\mathcal O}_{-i}(\cdot)$, the output in  \eqref{selec_cri} of the proposed mechanism design is a set $\{{\sf R}_k\}$ of $K$ relay nodes.

Let us define that $\pi_i$ is the average harvested power (price paid) against per unit of secrecy rate achieved by relay $i$. It is worth mentioning that the unit price $\pi_i$ may vary amongst the relays depending on their channel fading conditions. Thus the utility of ${\sf R}_i$ can be defined independently as
\begin{equation}
u_i\left(\hat{g}_i\right) = \left\{\begin{array}{l}\pi_i C_{i,{\rm s}}\left(\hat{g}_i\right), \qquad \text{if  } {\sf R}_i \text{  is selected},\\
0, \qquad\qquad\qquad \text{otherwise.} \end{array}\right.
\end{equation}

Note that in the existing game-theoretic approaches adopted in secrecy communication, the agents receive some virtual payment usually in terms of secrecy rate or transmit power \cite{mechD_truth}, which has no operational meaning to them. However, we propose the utility to be paid through some physical entity (e.g., harvested energy) for the first time. In this paper, we assume that only the relay nodes selected by the mechanism designer can get payoff i.e., harvest required energy from the first time slot. Although this may not always be the case in practice, it is a valid (reasonable) assumption since the mechanism designer selects the relays with the best channel conditions. Essentially, the unselected relay nodes, which have worse channel conditions as guaranteed by the proposed mechanism design, will harvest almost nothing. Applying energy beamforming\footnote{We do not consider energy beamforming in this paper. Readers are referred to \cite{tway_swipt, jrnl_secrecy} for energy beamforming strategies.} \cite{jrnl_secrecy} at both transmitting nodes, one can fully guarantee that the {\it unselected} relays will not be able to harvest any energy from the transmitters' signals. However, designing such spatially selective energy beamforming is a complicated task \cite{qli_spatial, tway_swipt, jrnl_secrecy} and requires additional resources (e.g., physical antennas) at the two transmitters, which is not compatible with the system settings (single-antenna transmitters) considered in this paper. Hence, in order to keep the main focus of this paper on mechanism design, we would like to leave transmit energy beamforming design as a potential future work. Since only the selected relay nodes can get payoff, some dishonest relays may exaggerate their channel information in order to create greater opportunity to be selected. This may result in an unfair selection and damage the expected payoff of the unselected relay nodes. Essentially, this will adversely affect the secrecy sum rate and no equilibrium can be achieved under this condition \cite{clarke, mechD_truth}. Hence we aim at designing a useful mechanism that can assist in controlling the incentives of the relays through imposing some penalty functions for the dishonest relay nodes. The penalty function will ensure that if any relay node is selected based on its exaggerated channel information, it will pay more transfer payment for the social damage caused from its own source of power in order to guarantee the required level of secrecy rate at each source node.

In order to better clarify the motivation that drives the relays to exaggerate their true valuations (i.e., CSI in this case), we introduce the probability of being selected affecting their valuation decision. The higher the valuation, the higher the probability of being selected, and so is the expected payoff. In this context, we assume that the relay nodes do not know the channel information of the other relays before they actually enact their channel information but generally know that the secrecy rate of each relay obeys certain probability density function $\left(0\leq C_{i,{\rm s}}(g_i)<\infty\right)$. Thus we define the reported valuation of ${\sf R}_i$ as
\begin{equation}
v_i\left({\mathcal O}(\hat g_i,\hat{g}_{-i}),g_i\right) \triangleq C_{i,{\rm s}}\left(\hat{g}_i\right){\rm Pr}\left({\sf R}_i\text{ being selected}\right),
\end{equation}
where ${\rm Pr}(A)$ indicates the probability that the event $A$ occurs. Accordingly, the expected payoff of ${\sf R}_i$ can be defined as
\begin{equation}
\tilde u_i\left(\hat{g}_i\right) \triangleq \pi_i C_{i,{\rm s}}\left(\hat{g}_i\right){\rm Pr}\left({\sf R}_i\text{ being selected}\right).
\end{equation}
Given the relay selection criterion \eqref{selec_cri}, the natural incentive of a relay would thus be to exaggerate its achievable secrecy rate $C_{i,{\rm s}}\left(\hat{g}_i\right)$ to $\infty$ in order to get the maximum expected payoff, which eventually increases their probability of being selected. Hence we introduce the following VCG transfer payment function
\begin{multline}\label{trans_fun}
t_i\left(\hat g_i,\hat g_{-i}\right)=\sum_{j\neq i}^Kv_j\left({\mathcal O}\left(\hat g_j,\hat g_{-j}\right),g_j\right)\\ - \sum_{j = 1}^Kv_j\left({\mathcal O}_{-i}\left(\hat g_j,\hat g_{-j}\right),g_j\right),
\end{multline}
where ${\mathcal O}_{-i}(\cdot)$ is the relay selection outcome when ${\sf R}_i$ does not participate in the mechanism. It is obvious from \eqref{trans_fun} that if a relay node claims a higher secrecy rate by tempering $\hat{h}_{1,i}$ or $\hat{h}_{2,i}$, it may have more chances to be selected, but runs the risk of paying extra transfer payoff through spending from its own source of power.\footnote{The exact mechanism to implement this will be discussed in Section~IV.} On the other hand, if a relay node reports a lower secrecy rate, it will receive a higher monetary compensation but at the cost of lower probability to be selected. Hence truth-telling is the dominant strategy in the proposed VCG mechanism. The idea will be elaborated in Section~V through numerical examples. In the VCG mechanism based relay selection algorithm, the total payoff of ${\sf R}_i$ is given by
\begin{multline}
u_i\left(\hat g_i,\hat g_{-i}\right) = v_i\left({\mathcal O}\left(\hat g_i,\hat g_{-i}\right),g_i\right) + t_i\left(\hat g_i,\hat g_{-i}\right)\\ = \sum_{j = 1}^Kv_j\left({\mathcal O}\left(\hat g_j,\hat g_{-j}\right),g_j\right)\\ - \sum_{j = 1}^Kv_j\left({\mathcal O}_{-i}\left(\hat g_j,\hat g_{-j}\right),g_j\right).\label{util_rel1}
\end{multline}
The following theorem describes the strength of the VCG mechanism for relay selection.

\begin{theorem}
Announcing truthfully, i.e., $\hat{g}_i = g_i$ is a dominant strategy for each relay $i$.
\end{theorem}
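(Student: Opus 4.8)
The plan is to establish the classical strategy-proofness property of the VCG mechanism: once the transfer payment \eqref{trans_fun} is substituted into the total payoff \eqref{util_rel1}, relay $i$'s payoff collapses to an expression whose only $\hat g_i$-dependent component is the \emph{true} aggregate valuation of the selected set, and this aggregate is maximized exactly when $i$ reports its true channel information $\hat g_i = g_i$.

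First I would fix the reports $\hat g_{-i}$ of all other relays as arbitrary but given, and start from the reduced form of relay $i$'s payoff already derived in \eqref{util_rel1},
\[
u_i(\hat g_i, \hat g_{-i}) = \sum_{j=1}^K v_j\!\left(\mathcal{O}(\hat g_j, \hat g_{-j}), g_j\right) - \sum_{j=1}^K v_j\!\left(\mathcal{O}_{-i}(\hat g_j, \hat g_{-j}), g_j\right).
\]
The crucial structural observation is that the second sum is evaluated at the outcome $\mathcal{O}_{-i}$, which by definition is the selection made \emph{when relay $i$ does not participate}; hence it depends only on $\hat g_{-i}$ and is entirely independent of $\hat g_i$. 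Consequently, maximizing $u_i$ over the choice of report $\hat g_i$ is equivalent to maximizing the first sum alone, namely the aggregate valuation $\sum_{j=1}^K v_j(\mathcal{O}(\hat g_i,\hat g_{-i}), g_j)$ of the chosen set measured at the \emph{true} types.

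Next I would invoke the selection rule \eqref{selec_cri}: the mechanism always returns the $K$-subset that maximizes the sum of the \emph{reported} secrecy rates $\sum_k C_{k,{\rm s}}(\hat g_k)$. When relay $i$ reports truthfully, $\hat g_i = g_i$, the reported aggregate coincides term-by-term with the true aggregate $\sum_{j=1}^K v_j(\cdot, g_j)$ that $i$ actually cares about, so the designer returns precisely the set that globally maximizes relay $i$'s own objective over all feasible $K$-subsets. For any untruthful report $\hat g_i \neq g_i$, the designer instead optimizes a perturbed objective and may therefore output a set whose \emph{true} aggregate valuation is no larger than that obtained under truthful reporting. Chaining these two facts yields $u_i(g_i, \hat g_{-i}) \geq u_i(\hat g_i, \hat g_{-i})$ for every misreport $\hat g_i$ and every profile $\hat g_{-i}$, which is exactly the dominant-strategy assertion of the theorem.

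The hard part will be making the alignment step fully rigorous, since the designer optimizes \emph{reported} welfare whereas relay $i$'s payoff is measured in \emph{true} welfare; I must argue that these two optimizations coincide under truth-telling and that any deviation can only relocate the chosen set to one of weakly smaller true welfare. Additional care is needed to treat the tie-breaking convention and the boundary case in which $i$ is not selected under $\mathcal{O}(\cdot)$ (so its valuation term drops out and its transfer is zero), and to confirm that fixing $p_{{\rm s},1}, p_{{\rm s},2}, p_{\rm r}$ at their maxima during the mechanism phase—justified by the monotonicity of $C_{i,{\rm s}}$ noted after \eqref{csi}—makes each reported valuation a well-defined scalar that can be ranked consistently.
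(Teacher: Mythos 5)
Your proposal is correct and follows essentially the same route as the paper's proof: drop the $\mathcal{O}_{-i}$ term as independent of $\hat g_i$, reduce relay $i$'s objective to the true aggregate welfare of the selected set, and observe that the selection rule \eqref{selec_cri} aligns with that objective exactly under truthful reporting. Your version is in fact somewhat more careful than the paper's (which does not explicitly address the reported-versus-true welfare alignment, tie-breaking, or the unselected case), but the decomposition and key idea are identical.
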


\begin{proof}
We need to prove that announcing $\hat{g}_i = g_i$ is the best strategy for relay $i$ no matter what other relays announce. If relay ${\sf R}_i$ announces $\hat{g}_i$ and others announce $\hat g_{-i}$, then according to \eqref{util_rel1}, ${\sf R}_i$'s utility is $u_i\left(\hat g_i,\hat g_{-i}\right) = v_i\left({\mathcal O}(\hat g_i,\hat{g}_{-i}),g_i\right) + \sum_{j \ne i}^Kv_j\left({\mathcal O}(\hat g_j,\hat{g}_{-j}),g_j\right) - \sum_{j = 1}^Kv_j\left({\mathcal O}_{-i}(\hat g_j,\hat{g}_{-j}),g_j\right)$. Relay $i$ has to decide which $\hat{g}_i$ to announce; however, it cannot determine ${\mathcal O}_{-i}(\hat g_j,\hat{g}_{-j})$ since it is excluded from that society. Hence, we can ignore the last term in $u_i\left(\hat g_i,\hat g_{-i}\right)$ as it is unaffected by ${\sf R}_i$'s announcement. Therefore, in order to maximize its own payoff, relay ${\sf R}_i$ aims to maximize the total utility of the society inclusive of itself. Since relay ${\sf R}_i$ cannot choose other relays' announcements, it can only play its own part. That is, by truthfully announcing, $\hat{g}_i = g_i$, it can ensure that ${\mathcal O}( g_i, \hat{g}_{-i})$ will be chosen. Hence announcing truthfully is the best thing relay ${\sf R}_i$ can do.
\end{proof}

Note that each relay node competing to be selected will have the same incentive to report its true CSI and the $K$ relays that can achieve the top $K$ secrecy rates will be selected which will eventually maximize the total payoff. Thus equilibrium is achieved under this condition.

Interestingly, the only additional task for implementing the proposed mechanism in relay selection, as opposed to conventional relay selection, is the calculation of the transfer payments, which involves simple mathematical operations. In return, the benefit is that the mechanism enforces the relays to reveal their true CSI. No additional signalling is needed since the node performing the optimization and/or relay selection can effectively implement the mechanism. A quantitative comparison of benefits has been provided in {\em Example~1}. As demonstrated in the example, if agent $4$ announces an exaggerated valuation, the total social welfare counts to $7$ as opposed to $19$ if all the agents would have announced their true valuations. Thus the VCG mechanism gives the incentives that if any of the agents announces untrue valuation, that may damage the total social benefit as well as its own utility.

Once the best relays are selected based on their reported channel information, the  optimization of the transmit power and cooperative relay beamforming is conducted, which we discuss in the next section.

\section*{\sc IV. Optimal Transmit Power and Relay Beamforming Design}\label{sec_opt_alg}
In this section, we propose transmit power and cooperative relay beamforming optimization schemes assuming that full CSI of all the nodes is available. Although in some practical communication systems, obtaining the eavesdropper's CSI can be very difficult (or even impossible), for the ease of exposition, we assume that the relays know their channels with the transmitters as well as the eavesdropper. This is a reasonable assumption for scenarios where the eavesdropper is an active user of the system, and the transmitter aims to provide different services to different types of users. For such active eavesdroppers, the CSI can be estimated from the eavesdropper's transmission. Let us define $P_{{\rm b},i} \triangleq P_{{\rm h},i} - |f_i|^2\left(p_{{\rm s},1}|h_{1,i}|^2 + p_{{\rm s},2}|h_{2,i}|^2 + \sigma^2\right)$ as the net power to be stored in the battery of the $i$th relay. The overall objective is to increase $C_1$ and $C_2$ as well as $P_{{\rm b},i}$ as much as possible while keeping $C_{\rm e}$ as small as possible under peak power constraints at the two transmitters as well as each relay node. Hence we formulate the following optimization problem
\begin{subequations}\label{rateP1}
\begin{eqnarray}
\max_{p_{{\rm s},1},p_{{\rm s},2},{\bf f}} \!\!\!& &\!\!\! \left[C_{1} + C_{2} - C_{{\rm e}}\right]^+ + \min_i P_{{\rm b},i}\label{rateP1_o}\\
{\rm s.t.} \!\!\!& &\!\!\! P_{{\rm h},i} \geq u_i\left(\hat g_i,\hat g_{-i}\right),~\mbox{for }i = 1, \dots, K,\label{rateP1_c1}\\
\!\!\!& &\!\!\! |f_i|^2\left(p_{{\rm s},1}|h_{1,i}|^2 + p_{{\rm s},2}|h_{2,i}|^2 + \sigma^2\right)\leq p_{\rm r},\nonumber\\
\!\!\!& &\!\!\! \qquad\qquad\qquad\qquad~\mbox{for }i = 1, \dots, K,\label{rateP1_c2}\\
\!\!\!& &\!\!\! p_{{\rm s},1} \leq P_{\rm max},~~p_{{\rm s},2} \leq P_{\rm max}. \label{rateP1_c3}
\end{eqnarray}
\end{subequations}
Here $P_{\rm max}$ and $p_{\rm r}$ are the available power budgets at the two sources and each of the relay nodes, respectively. Note that the last term in \eqref{rateP1_o} indicates the saved power of the worst selected relay. In general, it may happen that $P_{{\rm b},i}$ is negative, which essentially means that the $i$th selected relay may need to contribute additional power from its own storage in order to maintain its reported secrecy rate. However, the constraint \eqref{rateP1_c1} ensures that each of the selected relays gets its appropriate payoff. To guarantee that the relay nodes do not need to use their own source of power, they may set $p_{\rm r}\leq u_i\left(\hat g_i,\hat g_{-i}\right)$. Then the constraints \eqref{rateP1_c1} and \eqref{rateP1_c2} jointly guarantee that the honest selected relays can harvest sufficient energy required for their transmission in the second phase. However, there is no guarantee that a dishonest relay will be able to harvest appropriate amount of energy since they likely have weaker fading channels than what they have reported. Since we assume that the selected relays transmit with sufficient power during the second phase such that their promised secrecy rates at two sources are maintained, only the honest relay nodes do not need to utilize their own source of power. Although $u_i\left(\hat g_i,\hat g_{-i}\right)$ can assume any value in a general sense, we obtain $u_i\left(\hat g_i,\hat g_{-i}\right)$ from \eqref{util_rel1} assuming $p_{{\rm s},1} = p_{{\rm s},2} = P_{\rm max}$.

Note that the objective function in \eqref{rateP1_o} includes the product of three correlated Rayleigh quotients, which is neither convex, nor concave, and is in general very difficult to solve. However, a more tractable but suboptimal strategy for designing beamforming is to choose the beamforming vector lying in the null space of the eavesdropper's channel in the second time slot. The corresponding beamforming optimization problem is to maximize the sum rate achieved at two sources instead of sum secrecy rate. Because we cannot cancel the information rate leakage to the eavesdropper during the first time slot, the impact of the eavesdropper's achievable information rate on the secrecy sum rate should be considered when optimizing the beamforming vector as well as two source powers. As such, we can try to degrade the eavesdropper's interception by constraining its maximum allowable information rate with a predetermined level $r_{\rm e}$, which can help avoid dealing with the rate difference of concave functions in \eqref{rateP1_o}.
If the relay nodes choose the beamforming vector ${\bf f}$ lying in the null space of the eavesdropper's equivalent channel vectors, then the information leackage in the second phase is completely eliminated, i.e., ${\bf f}^H{\bar{\bf h}}_{1,{\rm e}} = {\bf f}^H{\bar{\bf h}}_{2,{\rm e}} = 0$ so that the second row of ${\bf H}_{\rm e}$ in \eqref{ye} can be eliminated.
Thus $C_{\rm e}$ reduces to
\begin{eqnarray}
C_{\rm e} = \frac{1}{2}\log_2\left(1 + \frac{p_{{\rm s},1}|h_{1,{\rm e}}|^2 + p_{{\rm s},2}|h_{2,{\rm e}}|^2}{\sigma^2}\right).
\end{eqnarray}
Introducing a real-valued slack variable $\nu$, we reformulate problem \eqref{rateP1} as
\begin{subequations}\label{rateP2}
\begin{eqnarray}
\max_{p_{{\rm s},1},p_{{\rm s},2},{\bf f}, \nu} \!\!\!& &\!\!\! \frac{1}{2}\log_2\left(1 + \frac{p_{{\rm s},2}{\bf f}^H{\bf h}_{2,1}{\bf h}_{2,1}^H{\bf f}}{\sigma^2\left(1 + {\bf f}^H{\bf C}_{{\rm n},1}{\bf f}\right)}\right)\nonumber\\
\!\!\!& &\!\!\! + \frac{1}{2}\log_2\left(1 + \frac{p_{{\rm s},1}{\bf f}^H{\bf h}_{1,2}{\bf h}_{1,2}^H{\bf f}}{\sigma^2\left(1 + {\bf f}^H{\bf C}_{{\rm n},2}{\bf f}\right)}\right) + \nu \label{rateP2_o}\\
{\rm s.t.} \!\!\!& &\!\!\! \frac{1}{2}\log_2\!\left(\!1 \!+ \frac{p_{{\rm s},1}|h_{1,{\rm e}}|^2 + p_{{\rm s},2}|h_{2,{\rm e}}|^2}{\sigma^2}\right)\! \leq \! r_{\rm e}\label{rateP2_c1}\\
\!\!\!& &\!\!\! \left(1 - \rho_i\right)\left(p_{{\rm s},1}|h_{1,i}|^2 + p_{{\rm s},2}|h_{2,i}|^2 + \sigma^2\right) \geq u_i\nonumber\\
\!\!\!& &\!\!\! \qquad\qquad\times\left(\hat g_i,\hat g_{-i}\right),~\mbox{for }i = 1, \dots, K,\label{rateP2_c2}\\
\!\!\!& &\!\!\! |f_i|^2\left(p_{{\rm s},1}|h_{1,i}|^2 + p_{{\rm s},2}|h_{2,i}|^2 + \sigma^2\right)\leq p_{\rm r},\nonumber\\
\!\!\!& &\!\!\! \qquad\qquad\qquad\qquad~\mbox{for }i = 1, \dots, K,\label{rateP2_c3}\\
\!\!\!& &\!\!\! P_{{\rm b},i} \ge \nu,~\mbox{for }i = 1, \dots, K,\label{rateP2_c4}\\
\!\!\!& &\!\!\! p_{{\rm s},1} \leq P_{\rm max},~~ p_{{\rm s},2} \leq P_{\rm max}, \label{rateP2_c5}
\end{eqnarray}
\end{subequations}
where ${\bf f} = {\bar{\bf H}}_{\rm e}^\dag{\bar{\bf f}}$, ${\bar{\bf f}}$ is any vector, ${\bar{\bf H}}_{\rm e}^\dag$ is the projection matrix onto the null space of ${\bar{\bf H}}_{\rm e} \triangleq \left[{\bar{\bf h}}_{1,{\rm e}}, {\bar{\bf h}}_{2,{\rm e}}\right]$, the columns of which constitute the orthogonal basis for the null space of ${\bar{\bf H}}_{\rm e}$. Note that from \eqref{rateP2_c3}, the transmit power of the $i$th relay node can be expressed as $\left[{\bf f}{\bf f}^H\right]_{i,i}\left[{\bf R}_{\rm s}\right]_{i,i}$ with ${\bf R}_{\rm s} = p_{{\rm s},1}{\bf H}_{1,{\rm r}}{\bf H}_{1,{\rm r}}^H + p_{{\rm s},2}{\bf H}_{2,{\rm r}}{\bf H}_{2,{\rm r}}^H + \sigma^2{\bf I}_K$.
Also, for given $p_{{\rm s},1}$ and $p_{{\rm s},2}$, we can see from \eqref{rateP2} that \eqref{rateP2_c1}, \eqref{rateP2_c2}, and \eqref{rateP2_c5} are irrelevant to ${\bf f}$. However, the problem is still non-convex since the objective function is not concave. Hence we split the objective function and formulate the following relay beamforming optimization problem
\begin{subequations}\label{rateP3}
\begin{eqnarray}
\max_{{\bf f}, r_0, \nu} \!\!\!& &\!\!\! r_0 + \nu \label{rateP3_o}\\
{\rm s.t.} \!\!\!& &\!\!\! \frac{1}{2}\log_2\left(1 + \frac{p_{{\rm s},2}{\bf f}^H{\bf h}_{2,1}{\bf h}_{2,1}^H{\bf f}}{\sigma^2\left(1 + {\bf f}^H{\bf C}_{{\rm n},1}{\bf f}\right)}\right) \geq \beta r_0\label{rateP3_c1}\\
\!\!\!& &\!\!\! \frac{1}{2}\log_2\!\left(1 + \frac{p_{{\rm s},1}{\bf f}^H{\bf h}_{1,2}{\bf h}_{1,2}^H{\bf f}}{\sigma^2\left(1 + {\bf f}^H{\bf C}_{{\rm n},2}{\bf f}\right)}\right)\! \geq \!(1\!-\!\beta)r_0\label{rateP3_c2}\\
\!\!\!& &\!\!\! \left[{\bf f}{\bf f}^H\right]_{i,i}\left[{\bf R}_{\rm s}\right]_{i,i} \leq p_{\rm r},~\mbox{for }i = 1, \dots, K,\label{rateP3_c3}\\
\!\!\!& &\!\!\! \left(1 - \rho_i - \left[{\bf f}{\bf f}^H\right]_{i,i}\right)\left[{\bf R}_{\rm s}\right]_{i,i} \ge \nu,\nonumber\\
\!\!\!& &\!\!\! \qquad\qquad\qquad\qquad~\mbox{for }i = 1, \dots, K,\label{rateP3_c4}
\end{eqnarray}
\end{subequations}
where $r_0$ is the objective value for the sum rate in \eqref{rateP2_o} and $\beta \in [0,1]$ is the rate splitting coefficient. The optimal solution of the problem can be found in two steps. First we solve problem \eqref{rateP3} for a feasible $r_0$ to obtain ${\bf f}$. Then we perform a one-dimensional search on $\beta$ to find the maximum $r_0$ for which problem \eqref{rateP3} is feasible. The lower bound of the rate search is definitely $0$. However, to define the upper bound $r_{\rm max}$, we first decouple the two-way relay channel into two one-way relay channels and obtain the rate $r_i$ of each one-way channel. Then the upper limit can be defined as $r_{\rm max} = 2\times {\rm max}(r_1, r_2)$. Let us now substitute ${\bf f} = {\bar{\bf H}}_{\rm e}^\dag{\bar{\bf f}}$ in \eqref{rateP3} to obtain
\begin{subequations}\label{rateP4}
\begin{eqnarray}
\max_{{\bar{\bf f}}, r_0, \nu} \!\!\!& &\!\!\! r_0 + \nu \label{rateP4_o}\\
{\rm s.t.} \!\!\!& &\!\!\! \frac{{\bar{\bf f}}^H{\bar{\bf H}}_{\rm e}^{\dag H}{\bf h}_{2,1}{\bf h}_{2,1}^H{\bar{\bf H}}_{\rm e}^\dag{\bar{\bf f}}}{1 + {\bar{\bf f}}^H{\bar{\bf H}}_{\rm e}^{\dag H}{\bf C}_{{\rm n},1}{\bar{\bf H}}_{\rm e}^\dag{\bar{\bf f}}} \geq \frac{\sigma^2}{p_{{\rm s},2}}\left(2^{2\beta r_0}-1\right),\label{rateP4_c1}\\
\!\!\!& &\!\!\! \frac{{\bar{\bf f}}^H{\bar{\bf H}}_{\rm e}^{\dag H}{\bf h}_{1,2}{\bf h}_{1,2}^H{\bar{\bf H}}_{\rm e}^\dag{\bar{\bf f}}}{1 + {\bar{\bf f}}^H{\bar{\bf H}}_{\rm e}^{\dag H}{\bf C}_{{\rm n},2}{\bar{\bf H}}_{\rm e}^\dag{\bar{\bf f}}} \geq \frac{\sigma^2}{p_{{\rm s},1}}\!\left(\!2^{2(1 - \beta) r_0}\!-\!1\!\right),\label{rateP4_c2}\\
\!\!\!& &\!\!\! \left[{\bar{\bf H}}_{\rm e}^\dag{\bar{\bf f}}{\bar{\bf f}}^H{\bar{\bf H}}_{\rm e}^{\dag H}\right]_{i,i}\!\left[{\bf R}_{\rm s}\right]_{i,i} \!\leq\! p_{\rm r},\mbox{for }i = 1, \dots, K,\label{rateP4_c3}\\
\!\!\!& &\!\!\! \left(1 - \rho_i - \left[{\bar{\bf H}}_{\rm e}^\dag{\bar{\bf f}}{\bar{\bf f}}^H{\bar{\bf H}}_{\rm e}^{\dag H}\right]_{i,i}\right)\left[{\bf R}_{\rm s}\right]_{i,i} \ge \nu,\nonumber\\
\!\!\!& &\!\!\! \qquad\qquad\qquad\qquad\mbox{for }i = 1, \dots, K. \label{rateP4_c4}
\end{eqnarray}
\end{subequations}
Problem \eqref{rateP4} is a non-convex quadratically constrained quadratic programming (QCQP) problem which is $NP$-hard in general. We reformulate problem \eqref{rateP4} as follows:
\begin{subequations}\label{rateP5}
\begin{eqnarray}
\max_{{\bar{\bf f}}, r_0, \nu} \!\!\!& &\!\!\! r_0 + \nu \label{rateP5_o}\\
{\rm s.t.} \!\!\!& &\!\!\! \left|{\bar{\bf f}}^H{\bar{\bf H}}_{\rm e}^{\dag H}{\bf h}_{2,1}\right|^2 \geq \eta_1\left\|\left[\begin{array}{c}\sqrt{{\bf C}_{{\rm n},1}}{\bar{\bf H}}_{\rm e}^\dag{\bar{\bf f}}\\
1\end{array}\right]\right\|^2,\label{rateP5_c1}\\
\!\!\!& &\!\!\! \left|{\bar{\bf f}}^H{\bar{\bf H}}_{\rm e}^{\dag H}{\bf h}_{1,2}\right|^2 \geq \eta_2\left\|\left[\begin{array}{c}\sqrt{{\bf C}_{{\rm n},2}}{\bar{\bf H}}_{\rm e}^\dag{\bar{\bf f}}\\
1\end{array}\right]\right\|^2,\label{rateP5_c2}\\
\!\!\!& &\!\!\! \left|{\bar{\bf H}}_{\rm e}^{\dag(i)}{\bar{\bf f}}\right|^2 \leq \frac{p_{\rm r}}{\left[{\bf R}_{\rm s}\right]_{i,i}},~\mbox{for }i = 1, \dots, K,\label{rateP5_c3}\\
\!\!\!& &\!\!\! \left|{\bar{\bf H}}_{\rm e}^{\dag(i)}{\bar{\bf f}}\right|^2 \!\leq\! 1 \!-\! \rho_i \!-\!\frac{\nu}{\left[{\bf R}_{\rm s}\right]_{i,i}}, \mbox{for }i = 1, \dots, K,\label{rateP5_c4}
\end{eqnarray}
\end{subequations}
where $\eta_1 \triangleq \sigma^2\left(2^{2\beta r_0}-1\right)/p_{{\rm s},2}$, $\eta_2 \triangleq \sigma^2\left(2^{2(1 - \beta) r_0}-1\right)/p_{{\rm s},1}$, $\sqrt{{\bf C}_{{\rm n},i}}$ is the element-wise square root of ${\bf C}_{{\rm n},i}$, and ${\bar{\bf H}}_{\rm e}^{\dag(i)}$ indicates the $i$th row of ${\bar{\bf H}}_{\rm e}^{\dag}$. Since the constraints in \eqref{rateP5} are expressed in terms of Euclidean vector norms, multiplying the optimal ${\bar{\bf f}}$ by an arbitrary phase shift $e^{j\phi}$ will not affect the constraints. Also, by definition, ${\bf h}_{2,1}$ and ${\bf h}_{1,2}$ yield identical numeric value. Therefore, ${\bar{\bf f}}^H{\bar{\bf H}}_{\rm e}^{\dag H}{\bf h}_{i,j}$ can be considered as a real number, without loss of generality. Consequently, \eqref{rateP5} can be rewritten as
\begin{subequations}\label{rateP6}
\begin{eqnarray}
\max_{{\tilde{\bf f}}, r_0, \nu} \!\!\!& &\!\!\! r_0 + \nu \label{rateP6_o}\\
{\rm s.t.} \!\!\!& &\!\!\! \left\|\tilde{\bf C}_{{\rm n},1}\tilde{\bf f}\right\| \leq \frac{1}{\sqrt{\eta_1}}\tilde{\bf h}_{2,1}^H \tilde{\bf f}, \label{rateP6_c1}\\
\!\!\!& &\!\!\! \left\|\tilde{\bf C}_{{\rm n},2}\tilde{\bf f}\right\| \leq \frac{1}{\sqrt{\eta_2}}\tilde{\bf h}_{1,2}^H \tilde{\bf f}, \label{rateP6_c2}\\
\!\!\!& &\!\!\! \left|{\tilde{\bf h}}_{{\rm e},i}^H{\tilde{\bf f}}\right| \leq \sqrt{\frac{p_{\rm r}}{\left[{\bf R}_{\rm s}\right]_{i,i}}},~\mbox{for }i = 1, \dots, K,\label{rateP6_c3}\\
\!\!\!& &\!\!\! \left|{\tilde{\bf h}}_{{\rm e},i}^H{\tilde{\bf f}}\right| \leq\! \sqrt{1 \!-\! \rho_i\! -\! \frac{\nu}{\left[{\bf R}_{\rm s}\right]_{i,i}}}, \mbox{for }i = 1, \dots, K, \label{rateP6_c4}
\end{eqnarray}
\end{subequations}
where $\tilde{\bf f} \triangleq \left[{\bar{\bf f}}^T, 1\right]^T$, $\tilde{\bf h}_{i,j}^H = \left[\left({\bar{\bf H}}_{\rm e}^{\dag H}{\bf h}_{i,j}\right)^H, 0\right]$, ${\tilde{\bf h}}_{{\rm e},i} = \left[{\bar{\bf H}}_{\rm e}^{\dag(i)}, 0\right]^T$, and $\tilde{\bf C}_{{\rm n},i} = \left[\begin{array}{cc}\sqrt{{\bf C}_{{\rm n},i}}{\bar{\bf H}}_{\rm e}^{\dag H} & {\bf 0}\\
{\bf 0} & 1 \end{array} \right]$. Note that \eqref{rateP6} is a standard SOCP problem which can be efficiently solved by interior point methods \cite{boyd}. Once the optimal relay beamforming vector ${\bf f}$ is obtained, we formulate the following problem using the monotonic property of the $\log$ function to find the optimal $p_{{\rm s},1}$ and $p_{{\rm s},2}$:
\begin{subequations}\label{powerP1}
\begin{eqnarray}
\max_{p_{{\rm s},1},p_{{\rm s},2},\nu} \!\!\!& &\!\!\! \mu_1 p_{{\rm s},1} + \mu_2 p_{{\rm s},2} + \nu,\label{powerP1_o}\\
{\rm s.t.} \!\!\!& &\!\!\! p_{{\rm s},1}|h_{1,{\rm e}}|^2 + p_{{\rm s},2}|h_{2,{\rm e}}|^2 \leq \sigma^2\left(2^{2 r_{\rm e}}-1\right),\label{powerP1_c1}\\
\!\!\!& &\!\!\! \left(1-\rho_i\right)\left(p_{{\rm s},1}|h_{1,i}|^2 + p_{{\rm s},2}|h_{2,i}|^2 + \sigma^2\right)\nonumber\\
\!\!\!& &\!\!\! \qquad\qquad \geq u_i\left(\hat g_i,\hat g_{-i}\right),~\mbox{for }i = 1, \dots, K,\label{powerP1_c2}\\
\!\!\!& &\!\!\! p_{{\rm s},1}|h_{1,i}|^2 + p_{{\rm s},2}|h_{2,i}|^2 + \sigma^2 \le \frac{p_{\rm r}}{|f_i|^2},\nonumber\\
\!\!\!& &\!\!\! \qquad\qquad\qquad\qquad \mbox{for }i = 1, \dots, K,\label{powerP1_c3}\\
\!\!\!& &\!\!\! \left(1 - \rho_i - |f_i|^2\right)\left(p_{{\rm s},1}|h_{1,i}|^2 + p_{{\rm s},2}|h_{2,i}|^2 + \sigma^2\right)\nonumber\\
\!\!\!& &\!\!\! \qquad\qquad\qquad  \ge \nu,~\mbox{for }i = 1, \dots, K,\label{powerP1_c4}\\
\!\!\!& &\!\!\! p_{{\rm s},1} \leq P_{\rm max},~~p_{{\rm s},2} \leq P_{\rm max}, \label{powerP1_c5}
\end{eqnarray}
\end{subequations}
where $\mu_i = \frac{{\bf f}^H{\bf h}_{i,j}{\bf h}_{i,j}^H{\bf f}}{\sigma^2\left(1 + {\bf f}^H{\bf C}_{{\rm n},j}{\bf f}\right)}, i, j = 1, 2, i\neq j$.
The problem \eqref{powerP1} is convex for given $\rho_i$ and hence the globally optimal solution can be easily obtained using existing solvers \cite{cvx}. Thus we update the relay beamforming vector ${\bf f}$ and the transmit powers $p_{{\rm s},1}$ and $p_{{\rm s},2}$ alternatingly. Since we solve a convex subproblem at each step of the alternating algorithm, the objective function can either increase or maintain, but cannot decrease at each step of the algorithm. A monotonic convergence follows directly from this observation. The algorithm is summarized in Table~\ref{tab_alg}.

\begin{table}[!ht]
\centering \caption{Proposed alternating algorithm for solving problem \eqref{rateP1}} \label{tab_alg}
\begin{tabular}[t]{|c|l|}
\hline
Step & Action\\
\hline
1 & Initialize $p_{{\rm s},1} = p_{{\rm s},2} = p_{\rm r} = \frac{P_{\rm max}}{K+2}$.\\
\hline
2 & Repeat\\
\hline
& a) Solve the SOCP problem \eqref{rateP6} using existing solvers,\\
& \quad e.g., CVX \cite{cvx}.\\
& b) Solve the linear programming problem \eqref{powerP1}.
 \\
\hline
3 & Until \it{convergence}\\
\hline
\end{tabular}
\end{table}

\subsection{Complexity of the Algorithm}
We now focus on the computational complexity of the proposed optimization scheme. We analyze the complexity of the alternating algorithm step-by-step. Note that the relay beamforming optimization problem \eqref{rateP6} involves only SOC constraints, and hence can be solved using standard interior-point methods (IPM) \cite[Lecture 6]{nemiro_cvx_opt}. Therefore, we can use the worst-case computation time of IPM to analyze the complexity of the proposed method.  Now the overall complexity of the IPM for solving an SOCP problem containing $p$ constraints consists of two components:
\begin{itemize}
    \item[a)] \textit{Iteration Complexity}: The number of iterations required to reach an $\epsilon$-accurate ($\epsilon > 0$) optimal solution is in the order of $\ln(1/\epsilon)\sqrt{\beta(\mathcal{K})}$, where $\beta(\mathcal{K}) = 2p $ is known to be the barrier parameter.

    \item[b)] \textit{Per-Iteration Computation Cost}: A system of $n$ linear equations is required to be solved in each iteration where $n$ is the number of decision variables. The computation tasks include the formation of the coefficient matrix $\bf H$ of the system of linear equations and the factorization of $\bf H$. The cost of forming $\bf H$ sums on the order of $\kappa_{\rm for}=n \sum_{j=1}^p k_j^2$, $k_j$ is the dimension of the $j$th cone, while the cost of factorization is on the order of $\kappa_{\rm fac}=n^3$ \cite{nemiro_cvx_opt}.
\end{itemize}
Thus the overall computation cost for solving the problem using IPM is on the order of $\ln(1/\epsilon)\sqrt{\beta(\mathcal{K})}\\ \times(\kappa_{\rm for} + \kappa_{\rm fac})$. Using these concepts, we can now analyze the computational complexity of problem (37). Note that the number of decision variables $n$ is on the order of $K$ (ignoring the slack variables). Now, the problem (37) has $p = (2K + 2)$ SOC constraints. Thus the complexity of solving problem (37) is on the order of $4K\sqrt{(K+1)}\mathcal{O}(K)[(K+1)^2 + K^2 +1]\ln(1/\epsilon)$.

In the next step of the algorithm, problem \eqref{powerP1} is solve, which is a linear programming problem. Now the linear program \eqref{powerP1} can be solved in polynomial time at a worst-case complexity of $\mathcal{O}\left(3^{3.5}(3K+3)^2 \right)$ \cite{lin_prog_ipm}.

\section*{\sc V. Simulation Results}\label{sec_sim}
In this section, we study the performance of the proposed mechanism design and joint source-relay optimization algorithm for a two-way relay system through numerical simulations. We simulate a flat Rayleigh fading environment where the channel coefficients are randomly generated as zero-mean and unit-variance complex Gaussian random variables. The noise variance $\sigma^2$ is assumed to be unity. For simplicity, the power splitting coefficient $\rho_i, \forall i$, is fixed at $0.5$.

In the first few examples, we demonstrate the effectiveness of the VCG mechanism in self-enforcing truth-telling. Then we provide performance comparison of the proposed joint transmit power and cooperative relay beamforming optimization with some conventional schemes.

For the demonstration of the mechanism design examples, we assign randomly generated values $v_i(g_i)$ instead of calculating $C_{i,{\rm s}}, \forall i,$ which does not affect the relay selection mechanism. It is assumed that although relay $i$ does not know other relays' reported valuation, it knows that every reported value $v_{-i}(g_{-i})$ obeys the probability density function $e^{-x_i}$ where the random variable $x_i \triangleq v_{-i}(g_{-i})$, $x_i \in [0, \infty)$ and $\int_0^{+\infty}e^{-x_i}dx_i = 1$. For simplicity, it is assumed that the price paid per unit of secrecy rate is $\pi_i = 1,~\forall i$.

\begin{figure}
\centering
\includegraphics*[width=\columnwidth]{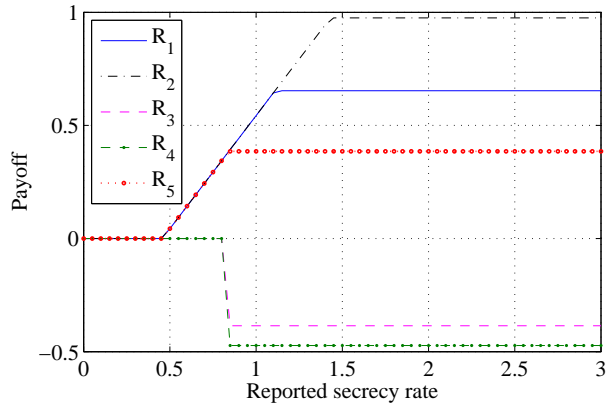}
\caption{Payoff in terms of harvested power (W) versus reported value $x_i$ using VCG mechanism.}\label{fig_vcg1}
\end{figure}

\begin{figure}
\centering
\includegraphics*[width=\columnwidth]{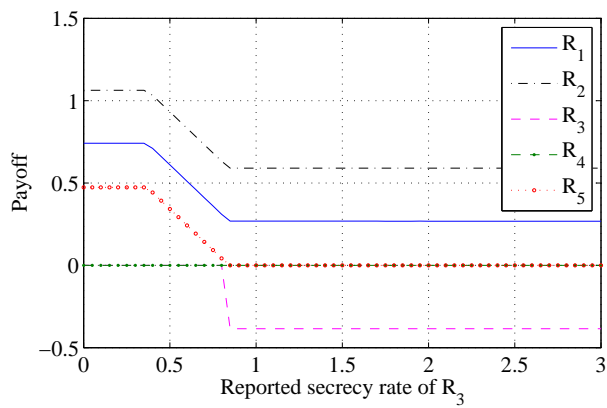}
\caption{Actual payoff of the relays versus reported value of ${\sf R}_3$.}\label{fig_vcgR3}
\end{figure}

In Fig.~\ref{fig_vcg1}, we illustrate how the VCG mechanism works using randomly generated true values of $x_i$'s as $\{1.1101, 1.4321, 0.4567, 0.3690, 0.8421\}$ where the mechanism is to select $K=3$ relays from $N=5$ alternatives. The payoff of each relay node is plotted versus reported $x_i$ values. Note that if all the relays report their true values, then ${\sf R}_1$, ${\sf R}_2$, and ${\sf R}_5$ will be selected and they get their maximum payoff at their true reported values of $1.1101$, $1.4321$, and $0.8421$. It can be observed from Fig.~\ref{fig_vcg1} that both ${\sf R}_1$, ${\sf R}_2$, and ${\sf R}_5$ start receiving positive payoff only after their reported values exceed the highest of the unselected relays' reported values since their selection is not guaranteed otherwise. Also, if either ${\sf R}_3$ or ${\sf R}_4$ reports a value higher than that of ${\sf R}_5$ ($0.8421$), it will be selected instead of ${\sf R}_5$. At that point, the selected relay gets a negative payment which indicates that it needs to use its own source of transmit power for relaying the signal, since it cannot harvest sufficient power due to a poorer actual channel. It is also evident from Fig.~\ref{fig_vcg1} that as long as a relay is not selected, it gets (or pays) nothing.

In the next example, we show the effect of exaggerated reported value by a particular relay (${\sf R}_3$) which is likely to be unselected based on its true channel information assuming that other relays report their true information. Results in Fig.~\ref{fig_vcgR3} illustrate the fact that the exaggerated reported value of ${\sf R}_3$ damages not only its own payoff if selected, but also that of the other relay nodes, which essentially damages the overall system payoff. As discussed in Section~III-A, this is due to the fact that the exaggerated reported value of ${\sf R}_3$ keeps a potential candidate (${\sf R}_5$ in this case) unselected, which results in a higher transfer payment of the selected relays. As soon as the reported value of ${\sf R}_3$ exceeds that of ${\sf R}_5$, it is selected but receives a negative payoff. However, the payoff of ${\sf R}_3$ is always unaffected since there are always some higher reported values than that of ${\sf R}_3$.

\begin{figure}
\centering
\includegraphics*[width=\columnwidth]{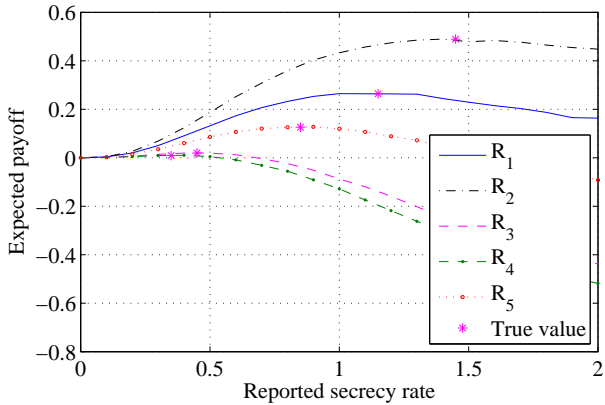}
\caption{Expected payoff of each relay node versus reported value $x_i$.}\label{fig_exVCG}
\end{figure}

Note that the results in Figs.~\ref{fig_vcg1} and \ref{fig_vcgR3} represent the exact payoffs of the relay nodes without taking the probability of being selected into consideration. Hence the payoff of any relay was zero if unselected. However, relays may take the probability of being selected into consideration when deciding which value to report. That will essentially affect their expected payoff as well. In the next example, we intend to show that truth-telling is the best strategy for the relays through their expected payoff where we want to select $K=3$ relays from $N=5$ alternatives. Results in Fig.~\ref{fig_exVCG} show the expected payoff of the relays when their reported values follow negative exponential probability distribution assuming their true affordable secrecy rate of $\{1.1101, 1.4321, 0.4567, 0.3690, 0.8421\}$. We consider a large number ($10^5$) of sample values to calculate the average expected payoff of each relay node at any given reported value. It is now more clearly indicated in Fig.~\ref{fig_exVCG} that truth-telling is the dominant strategy in VCG mechanism. Any agent can expect its maximum payoff only when it reports its true channel information. We can also observe that the larger the true secrecy value of a relay node, the higher the expected payoff. Also, the maximum expected payoff of any relay node is actually less than $u_i\left(\hat{g}_i\right)$ which is because each selected relay node has to pay a mandatory transfer payment as a recovery for the social damage caused by its selection.

The above numerical examples reveal that the VCG mechanism gives the right incentive to the bidders in an auction to disclose their true valuation. Given the mechanism has been implemented perfectly, we now focus on the joint transmit power and cooperative relay beamforming optimization. In order to demonstrate the gain achieved by the proposed SOCP-based joint transmit power and relay beamforming algorithm, we compare the secrecy sum-rate performance of the proposed joint optimization algorithm with that of the relay-only optimization and the conventional randomization-guided semidefinite relaxation (SDR) schemes \cite{multicast, wk_maa_sdr} in the next example. In the relay-only optimization scheme, the two source nodes transmit at fixed power (not optimized). That is, we solve problem \eqref{rateP6} with fixed $p_{{\rm s},1} = p_{{\rm s},2} = \frac{P_{\rm max}}{K+2}$. Note that relay-only optimization is considered for the SDR scheme as well.

In Fig.~\ref{fig_secR}, we compare the secrecy sum rate performance of the proposed algorithm (`Joint opt.' in the figure) with the relay-only optimization (`Relay-only opt.'), and the SDR method of relay beamforming design followed by randomization technique (`SDR approach'). In this example, we select $K=3$ and $4$ relays from a set of $N=8$ alternatives. Note that we initialize the algorithm in Section~IV with $p_{{\rm s},1} = p_{{\rm s},2} = p_{\rm r} = \frac{P_{\rm max}}{K+2}$ and update the transmit powers and relay beamforming vector alternatingly. For updating the transmit powers, we set the tolerable information leakage threshold $r_{\rm e} = 1$ (bps/Hz). Fig.~\ref{fig_secR} shows the performance improvement by the proposed joint optimization algorithm compared to the other two schemes. Since in the randomization approach, some of the constraints may be violated, the performance of the SDR algorithm is severely degraded. For example, at $P_{\rm max} = 10$ dB, the proposed relay-only optimization algorithm achieves more than $1$ bps/Hz higher secrecy sum rate than the randomization approach.

\begin{figure}
\centering
\includegraphics*[width=\columnwidth]{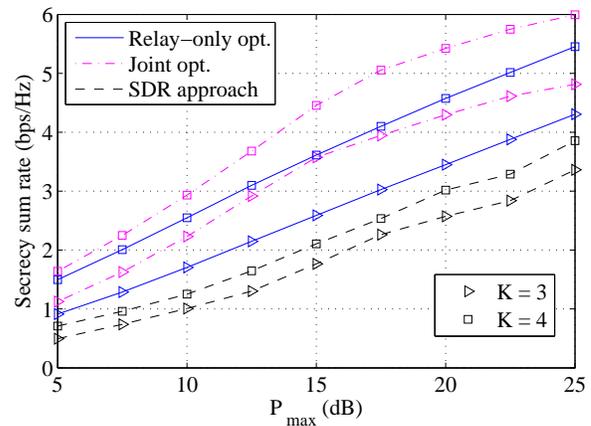}
\caption{Achievable secrecy sum rate versus maximum transmit power with $N =  8$ and $K = 3, 4$.}\label{fig_secR}
\end{figure}

Finally, we show the convergence of the proposed alternating algorithm by evaluating the number of iterations required to converge to an accuracy of $10^{-3}$. We generated four random channel realizations (Channels- $1, 2, 3, 4$) and solved problem \eqref{rateP1}. Fig.~\ref{fig_conv} shows the convergence of the secrecy sum rate maximization problem in different channel realizations with an initial $p_{{\rm s},1} = p_{{\rm s},2} = P_{\rm max}$ for $N = 5$ and $K = 2$. It can be observed that the proposed algorithm achieves a fast convergence in various channel scenarios.
\begin{figure}
\centering
\includegraphics*[width=\columnwidth]{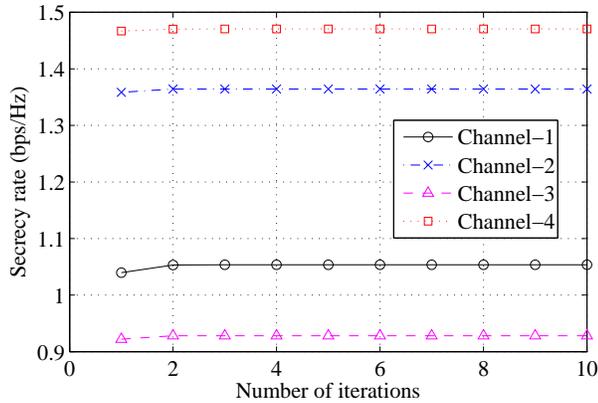}
\caption{Convergence of the proposed two-way relay beamforming algorithm with $N = 5$ and $K = 2$.}\label{fig_conv}
\end{figure}

\section*{\sc VI. Conclusions}\label{sec_con}
In this paper, we considered two-way secret communications via energy harvesting relay nodes. In order to maximize the secrecy rate, the source nodes selected the most suitable relay nodes from the available alternatives. The selected relay nodes, in return, could harvest energy which is guaranteed at least to the minimum payoff level. A self-enforcing truth-telling mechanism design approach was adopted for the relay selection procedure that guarantees that the relays will not exaggerate their true information in order to be selected to gain illegal payoff.
We then proposed a joint cooperative relay beamforming and transmission power optimization algorithm in order to maximize the achievable sum secrecy rate. Designing strategies for dedicated transmit energy beamforming can be an interesting future work.

\bibliographystyle{IEEEtran}\footnotesize{
\bibliography{IEEEabrv,refdb}}

\begin{IEEEbiography}
{Muhammad R. A. Khandaker}(S'10-M'13) received the bachelor of science (Honours) degree in computer science and engineering from Jahangirnagar University, Dhaka, Bangladesh, in 2006, the master of science degree in telecommunications engineering from East West University, Dhaka, Bangladesh, in 2007, and the Ph.D. degree in electrical and computer engineering from Curtin University, Australia, in 2013.

Dr. Khandaker has worked in a number of academic positions in Bangladesh. Since July 2013, he has been working as a postdoctoral researcher at the Department of Electronic and Electrical Engineering, University College London (UCL), United Kingdom. He was awarded the Curtin International Postgraduate Research Scholarship (CIPRS) for his Ph.D. study in 2009. He also received the Best Paper Award at the 16th IEEE Asia-Pacific Conference on Communications, Auckland, New Zealand, 2010. He regularly serves in the Technical Program Committees of the IEEE flagship conferences including Globecom, ICC, and VTC. He also served as the Managing Guest Editor of Elsevier Physical Communication, special issue on Self-optimizing Cognitive Radio Technologies. He is currently serving as the Lead Guest Editor of the EURASIP Journal on Wireless Communications and Networking, special issue on Heterogeneous Cloud Radio Access Networks.
\end{IEEEbiography}
\newpage
\begin{IEEEbiography}
{Kai-Kit Wong} (M'01-SM'08-F'16) received the BEng, the MPhil, and the PhD degrees, all in Electrical and Electronic Engineering, from the Hong Kong University of Science and Technology, Hong Kong, in 1996, 1998, and 2001, respectively. He is Professor of Wireless Communications at the Department of Electronic and Electrical Engineering, University College London, United Kingdom. He is Fellow of IEEE and IET. He is Senior Editor of IEEE Communications Letters and also IEEE Wireless Communications Letters.
\end{IEEEbiography}

\begin{IEEEbiography}
{Gan Zheng} (S'05-M'09-SM'12) received the BEng and the MEng from Tianjin University, Tianjin, China, in 2002 and 2004, respectively, both in Electronic and Information Engineering, and the PhD degree in Electrical and Electronic Engineering from The University of Hong Kong in 2008. He is currently a Senior Lecturer in Signal Processing and Networks Research Group in the Wolfson School of Mechanical, Electrical and Manufacturing Engineering, Loughborough University, UK. He had been on various research and visiting positions with University College London, KTH Royal Institute of Technology, and University of Luxembourg, and was with University of Essex as a Lecturer in Communications. His research interests include MIMO precoding, cooperative communications, cognitive radio, physical-layer security, full-duplex radio and energy harvesting. Dr. Zheng is a Senior Member of IEEE. He is the first recipient for the 2013 IEEE Signal Processing Letters Best Paper Award, and he also received 2015 GLOBECOM Best Paper Award. He currently serves as an Associate Editor for IEEE Communications Letters.
\end{IEEEbiography}

\end{document}